\newcommand{\mat}[1]{\mathbf{#1}}
\newcommand\eps{\varepsilon}
\newcommand\ee{\mathrm{e}}
\newcommand\ii{\mathrm{i}}
\newcommand\nm{\mathrm{~nm}}
\newcommand\TT{\mathrm{T}}
\newcommand\diag{\mathop{\rm diag}\limits}
\newtheorem{lemma}{Lemma}
\newtheorem*{theorem*}{Theorem}
\begin{document}

\title{Coherent perfect absorption and lasing \\in bimodal Fabry--P\'erot interferometers}

\author{Dmitry A. Bykov}
 \email{bykovd@gmail.com}
\author{Evgeni A. Bezus}
\author{Leonid L. Doskolovich}

\affiliation{Samara National Research University, 34 Moskovskoye shosse, Samara 443086, Russia}
\affiliation{Image Processing Systems Institute, National Research Centre ``Kurchatov Institute'', 151 Molodogvardeyskaya st., Samara 443001, Russia}

\date{\today}
			
\begin{abstract} 
Bimodal Fabry--P\'erot interferometer is a model generalizing the conventional Fabry--P\'erot interferometer, in which not one but two kinds of waves propagate between the interfaces. 
Here, we study coherent perfect absorption (CPA) and lasing at threshold in bimodal Fabry--P\'erot interferometers.
We show that CPA and lasing appear only in certain ``allowed'' regions in the parameter space, which, as we demonstrate, are described by closed-form inequalities imposed on the elements of the scattering matrix of the interferometer interfaces.
We demonstrate topologically governed annihilation of CPA points as they approach the boundary of a CPA-allowed region.
In the particular case when the absorption losses tend to zero,
the presented model describes the formation of bound states in the continuum exactly at the CPA annihilation points.
The presented analytical model is in perfect agreement with the rigorous numerical simulation results of high-contrast gratings and ridge resonators implementing the bimodal Fabry--P\'erot interferometer model.
\end{abstract}

\maketitle

%%%%%%%%%%%%%%%%%%%%%%%%%%%%%%%%%%%%%%%%%%%%%%%%%%%%%%%%%%%%%%%%%%%%%%%%%%%%%%%
\section{Introduction}
%%%%%%%%%%%%%%%%%%%%%%%%%%%%%%%%%%%%%%%%%%%%%%%%%%%%%%%%%%%%%%%%%%%%%%%%%%%%%%%

In the last fifteen years, coherent perfect absorption (CPA) of light attracted a lot of research attention~\cite{Chong:prl:2010, Baranov:2017:nrm, Krasnok:2019:aop, Wang:2021:s}.
CPA can be considered as a generalization of the perfect absorption effect~\cite{Radi:2015:prappl, Kats:2016:lpr} to the case when the structure is illuminated by several coherent waves, the amplitudes and phases of which are chosen so that no scattered radiation appears.
It is worth noting that an absorber exhibiting CPA can be regarded as a time-reversed laser operating exactly at threshold~\cite{Chong:prl:2010, Baranov:2017:nrm, Krasnok:2019:aop}.
Structures providing perfect absorption and coherent perfect absorption are promising for applications in 
sensing~\cite{Liu:2010:nl, Vasic:2014:jap},
all-optical image processing~\cite{Papaioannou:2016:lsa, Papaioannou:2017:acsp}, and
linear-optics implementation of switches, modulators, and logical gates~\cite{Fang:2014:apl, Fang:2015:lsa, Papaioannou:2016:lsa, Papaioannou:2016:aplp}.

Coherent perfect absorption was investigated in various photonic structures including 
thin films and multilayers~\cite{Chong:prl:2010, Villinger:2015:ol, Pye:2017:ol},
diffraction gratings~\cite{DuttaGupta:2012:ol, Yoon:2012:prl},
metasurfaces~\cite{Zhang:2012:lsa, Fang:2014:apl, Fang:2015:lsa, Zhu:2016:apl}, 
graphene-based structures~\cite{Fan:2014:ol, Zanotto:2016:aplp}, and disordered media~\cite{Pichler:2019:n}.
CPA was also studied for several integrated platforms including silicon photonics and plasmonics~\cite{Bruck:2013:oe, Ignatov:2016:adp, Bezus:2022:ol}.
In $\mathcal{P}\mathcal{T}$-symmetric structures, CPA and lasing can occur simultaneously~\cite{Longhi:2010:pra, Chong:2011:prl}.
For specially engineered structures, CPA can occur at an exceptional point~\cite{Wang:2021:s, Horner:2024:prl}.

For most of the structures investigated in the above referenced works, coherent perfect absorption can be described using the Fabry--P\'erot interferometer model.
In the simplest case, a Fabry--P\'erot interferometer comprises a slab having 
the interfaces coated with semi-transparent mirrors, which can be dielectric or metallic [Fig.~\ref{fig:1}(a)].
When the interferometer is illuminated with a plane wave,
the field inside the slab is the superposition of one plane wave propagating upwards and one plane wave propagating downwards.

\begin{figure}
	\centering
		\includegraphics{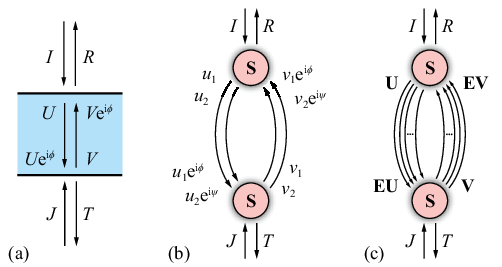}
	\caption{\label{fig:1}Conventional (a), bimodal (b), and generalized (c) Fabry--P{\'e}rot interferometers.	}
\end{figure}

In a \emph{generalized} Fabry--P\'erot interferometer [Fig.~\ref{fig:1}(c)], more than one kind of modes (or waves) propagate between the ``interfaces'' that couple these modes with each other and with the incident and scattered waves above and below the structure~\cite{Karagodsky:2010:oe, Karagodsky:2011:ol, Orta:2016:1, Orta:2016:2, Tibaldi:2018}. 
The word ``interfaces'' is enquoted here, since this part of the structure may be no longer flat containing, e.g., diffraction gratings or another complex surface relief.
Such a model allows one to describe resonant optical properties of a wide range of photonic structures including 
guided-mode resonant gratings~\cite{Bykov:2019:pra},
high-contrast gratings~\cite{Karagodsky:2010:oe, Karagodsky:2011:ol, Orta:2016:1, Orta:2016:2, Tibaldi:2018, Ovcharenko:2020:prb, Bykov:2024:pra},
and integrated ridge resonators~\cite{Bezus:2018:pr, Bykov:2020:n}.

Of particular interest are \emph{bimodal} Fabry--P\'erot interferometers, in which two kinds of modes propagate between the interfaces [Fig.~\ref{fig:1}(b)].
Such a model, permitting a detailed theoretical analysis of the resonant optical properties, was, in particular, used to describe bound states in the continuum (BICs) in various photonic structures~\cite{Bezus:2018:pr, Bykov:2019:pra, Bykov:2020:n, Ovcharenko:2020:prb, Bezus:2021:n, Bykov:2024:pra}.
Notably, it is exactly the bimodal case, for which simple closed-form expressions for BIC positions can be obtained~\cite{Bezus:2018:pr, Bykov:2019:pra, Bykov:2020:n}.
In this paper, we present a theory for CPA and lasing in bimodal Fabry--P\'erot interferometers.
We demonstrate that these effects appear only in certain parts of the parameter space and derive explicit inequalities defining these parts.

%%%%%%%%%%%%%%%%%%%%%%%%%%%%%%%%%%%%%%%%%%%%%%%%%%%%%%%%%%%%%%%%%%%%%%%%%%%%%%%
\section{Generalized Fabry--P\'erot interferometer model}
%%%%%%%%%%%%%%%%%%%%%%%%%%%%%%%%%%%%%%%%%%%%%%%%%%%%%%%%%%%%%%%%%%%%%%%%%%%%%%%

To obtain the equations describing the generalized Fabry--P\'erot interferometer, let us denote the complex amplitudes of the incident, reflected, and transmitted waves as $I$, $R$, and $T$.
We will also require another plane wave having the amplitude $J$, which is incident on the structure from below [see Fig.~\ref{fig:1}(c)].
We will assume that there are $n$ kinds of modes (waves) with the wavenumbers $k_{z,j},j=1,\ldots,n$ propagating from one interface to another.
The complex amplitudes of the modes propagating downwards, $u_j$, will be arranged to a column vector $\mat{U}$ ,
defining the amplitudes of the modes at the upper interface.
When they propagate to the lower interface, their phases will change resulting in the complex amplitude vector $\mat{E}\mat{U}$, where $\mat{E}$ is an $n\times n$ diagonal matrix:
$$ 
\mat{E} = \diag_j \ee^{\ii \phi_j}.
$$
Denoting the distance between the interfaces by $w$, 
we can express the phases as $\phi_j = k_{z,j} w$.
These phases will be real numbers if we assume that there is no loss in the medium separating the interfaces.
Similarly, the waves propagating upwards have the amplitudes $\mat{V} = [v_1, v_2, \ldots, v_n]^{\rm T}$ and $\mat{E}\mat{V}$ at the lower and upper interfaces, respectively.

To describe coupling of the introduced waves and modes, we will use the $(n+1)\times(n+1)$ scattering matrix of the upper interface:
\begin{equation}
\label{eq:S}
\mat{S} =
	\begin{bmatrix}
		\mat{r} & \mat{t}\\
		\mat{t}^\TT & r_0\\
  \end{bmatrix}.
\end{equation}
Here, the $n\times n$ submatrix $\mat{r}$ describes how the considered $n$ modes reflect from the interface,
vector $\mat{t}$ describes how these modes leak out of the structure contributing to the reflected field,
and $r_0$ is the coefficient describing how the incident wave $I$ is reflected from the interface.
In this paper, we will be considering structures with a horizontal symmetry plane,
thus the same scattering matrix describes the lower interface as well.
It is important to note that the matrix $\mat{S}$ is not unitary since there must be loss in a structure to demonstrate CPA, and gain is required for lasing.
We also note that the scattering matrix is assumed to be symmetric ($\mat{S}^\TT=\mat{S}$), which is the case for reciprocal structures in integrated optics and for reciprocal gratings with a vertical symmetry plane~\cite{Gippius:2005:prb}.

Using the above-introduced notation, we can couple the waves at the upper and lower interfaces,
which gives us a system of two matrix equations:
\begin{equation}
	\label{eq:main}
	\begin{bmatrix}
		\mat{U} \\ R
	\end{bmatrix}
	=
	\mat{S}
	\begin{bmatrix}
		\mat{E} \mat{V} \\I
	\end{bmatrix},
	\;\;\;\;\;\;
	\begin{bmatrix}
		\mat{V} \\ T
	\end{bmatrix}
	=
	\mat{S}
	\begin{bmatrix}
		\mat{E} \mat{U} \\ J
	\end{bmatrix}.
\end{equation}
This system describes light scattering by the considered structure and allows one to find its reflected and transmitted spectra.
In particular, it describes various resonant effects such as the formation of high-Q resonances~\cite{Bezus:2018:pr, Bykov:2019:pra, Bykov:2020:n, Ovcharenko:2020:prb, Bezus:2021:n, Bykov:2024:pra}.
We will use these equations to describe coherent perfect absorption and lasing at threshold.
Interestingly, the equations describing lasing have simpler form, thus we discuss it first before moving to the CPA.

%%%%%%%%%%%%%%%%%%%%%%%%%%%%%%%%%%%%%%%%%%%%%%%%%%%%%%%%%%%%%%%%%%%%%%%%%%%%%%%
\section{Lasing condition}
%%%%%%%%%%%%%%%%%%%%%%%%%%%%%%%%%%%%%%%%%%%%%%%%%%%%%%%%%%%%%%%%%%%%%%%%%%%%%%%

\subsection{General case}

Lasing states appear when the amplitudes of the scattered waves $R$ and $T$ are non-zero in the absence of the incident light, i.\,e., at $I$ = $J$ = 0.
At the same time, the light frequency $\omega$ is required to be real, which is possible only if there is gain in the structure and, thus, the scattering matrix $\mat{S}$ is non-unitary.
As above, we assume that the media between the upper and lower interfaces is lossless, thus, the gain media has to be incorporated inside the ``interfaces'', which makes the considered structure geometry similar to the external cavity lasers.
It is important to note that in this section, we will assume that the scattering matrix $\mat{S}$ of the interfaces captures all the ``laser physics'' of the structure.

Since the considered Fabry--P\'erot interferometer has a horizontal symmetry plane, all the solutions of the Maxwell's equations, including those describing lasing and CPA, are either symmetric or antisymmetric with respect to this plane.
We will study these two cases separately.

Symmetric lasing occurs when $R=T$ and $I=J=0$ in Eq.~\eqref{eq:main}, 
thus, light is emitted to the substrate and superstrate evenly and in phase.
Besides, due to symmetry, the amplitudes of the modes propagating upwards and downwards also coincide: $\mat{U}=\mat{V}$.
This allows us to rewrite the two equalities of Eq.~\eqref{eq:main} as a single one:
\begin{equation}
\label{eq:lasing:sym:1}
\mat{U} = \mat{r}\mat{E} \mat{U}.
\end{equation}
Here, we used the notation introduced in Eq.~\eqref{eq:S} for the sub-blocks of $\mat{S}$.
An expression for the emitted light amplitude also follows from Eq.~\eqref{eq:S}: $R = T = \mat{t}^\TT \mat{U}$, but it is of little use for the further analysis.
Equation~\eqref{eq:lasing:sym:1}, on the other hand, is quite important;
we can think of it as of the requirement for the matrix $\mat{r}\mat{E}$ to have a unit eigenvalue with the corresponding eigenvector $\mat{U}$.
We can rewrite this condition in terms of the matrix determinant:
\begin{equation}
\label{eq:lasing:sym:2}
\det(\mat{r}\mat{E}-\mat{I}) = 0,
\end{equation}
where $\mat{I}$ is the identity matrix.
This condition is the symmetric lasing condition for generalized Fabry--P\'erot interferometers.

The antisymmetric lasing condition is obtained similarly:
we simply rewrite Eq.~\eqref{eq:main} with $R=-T$, $\mat{V}=-\mat{U}$, and $I=J=0$ and obtain
\begin{equation}
\label{eq:lasing:asym:2}
\det(\mat{r}\mat{E}+\mat{I}) = 0.
\end{equation}

\subsection{Bimodal case}
In this paper, we are paying particular attention to the case of bimodal Fabry--P\'erot interferometers, in which two ($n=2$) kinds of modes propagate between the interfaces. 
This case is worth studying since it admits a more detailed analysis, as we demonstrate below.

Let us denote the elements of the matrices $\mat{S}$ and $\mat{E}$ as follows:
\begin{equation}
\label{eq:S3}
\mat{S} = 
	\begin{bmatrix}
		r_{11} & r_{12} & t_1\\
		r_{12} & r_{22} & t_2\\
		t_1 & t_2 & r_0
  \end{bmatrix}; \;\;\;\;\;
\mat{E} = 
	\begin{bmatrix}
		\ee^{\ii \phi}& 0 \\
		0 & \ee^{\ii \psi}
  \end{bmatrix}.
\end{equation}
With this notation, we can rewrite the symmetric lasing condition of Eq.~\eqref{eq:lasing:sym:2} as
\begin{equation}
\label{eq:lasing:n2}
\det
	\begin{bmatrix}
		r_{11} \ee^{\ii \phi} - 1 & r_{12}\ee^{\ii \psi}    \\
		r_{12} \ee^{\ii \phi}     & r_{22}\ee^{\ii \psi} - 1 \\
  \end{bmatrix} = 0.
\end{equation}
Now we expand the determinant in the last equation and rewrite it as 
\begin{equation}
\label{eq:lasing:n3}
(r_{11} r_{22}-r_{12}^2) \ee^{\ii\phi} \ee^{\ii\psi}
-r_{11} \ee^{\ii \phi}
-r_{22} \ee^{\ii \psi}
+ 1 = 0.
\end{equation}

Let us show that this lasing condition can be reformulated as an explicit inequality relating only the elements of the scattering matrix $\mat{S}$.
In order for lasing to be possible, the last equation must be solvable for some real values of $\phi$ and $\psi$.
To rewrite this requirement, we will use the following theorem, which is proved in Appendix~\ref{app:A}:
\begin{theorem*}
Equation
%\begin{equation}
%\label{leq2}
$$
a \ee^{\ii \phi}+b \ee^{\ii \psi}+c \ee^{\ii \phi}\ee^{\ii \psi}+d = 0
$$
%\end{equation}
with $a,b,c$, and $d$ being complex numbers is solvable in real $\phi$ and $\psi$ if and only if
%\begin{equation}
%\label{l2cond}
$$
\left||a|^2+|b|^2-|c|^2-|d|^2\right| \leq 2|ab-cd|.
$$
%\end{equation}
\end{theorem*}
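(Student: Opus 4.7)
The plan is to reduce the two-variable real solvability question to a one-variable modulus condition by solving the equation for one exponential in terms of the other, and then reconcile the resulting inequality with the one stated in the theorem.

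Writing $x=\ee^{\ii\phi}$ and $y=\ee^{\ii\psi}$, both on the unit circle $S^{1}$, I would first group the equation as $x(a+cy)+(by+d)=0$. Away from the exceptional locus $a+cy=0$ (a degenerate case I postpone to the end), this determines $x=-(by+d)/(a+cy)$ uniquely, so the system is solvable precisely when some $y\in S^{1}$ yields an $x$ that also lies on $S^{1}$, i.e., when $|by+d|=|a+cy|$. Squaring this equality and using $|y|=1$ to expand each side reduces it to the real, linear-in-$y$ identity
\[
2\Re\!\big((b\bar d-\bar a c)\,y\big)=|a|^{2}+|c|^{2}-|b|^{2}-|d|^{2}.
\]
As $y$ ranges over $S^{1}$, the left-hand side (a continuous function $S^{1}\to\mathbb{R}$) sweeps out the whole interval $[-2|b\bar d-\bar a c|,\,2|b\bar d-\bar a c|]$, so a real solution exists if and only if
\[
\bigl||a|^{2}+|c|^{2}-|b|^{2}-|d|^{2}\bigr|\le 2\,|b\bar d-\bar a c|. \qquad(\star)
\]
This is the desired characterisation, but phrased with the pair $\{b,c\}$ playing the role that $\{c,d\}$ plays in the statement of the theorem.

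The main technical step — and where I expect the real work to lie — is to show that $(\star)$ is algebraically equivalent to the inequality actually claimed. For this I would prove the polynomial identity
\[
\bigl(|a|^{2}+|b|^{2}-|c|^{2}-|d|^{2}\bigr)^{2}-4|ab-cd|^{2}
=\bigl(|a|^{2}+|c|^{2}-|b|^{2}-|d|^{2}\bigr)^{2}-4|b\bar d-\bar a c|^{2}
\]
by direct expansion: both sides collapse to the same symmetric expression in $|a|^{2},|b|^{2},|c|^{2},|d|^{2}$ and the single mixed term $\Re(ab\bar c\bar d)$. Since each of the two inequalities has the shape $(\text{nonneg.})^{2}\le(\text{nonneg.})^{2}$, the identity forces them to hold or fail together.

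Finally, a short coda would cover the degenerate possibilities: $a+cy=0$ for some $y\in S^{1}$ (which can happen only when $|a|=|c|$, and then either also $by+d=0$ and the consistency is captured by $(\star)$, or no solution exists with that particular $y$ and we proceed with the remaining ones); $b\bar d-\bar a c=0$, in which $(\star)$ degenerates into requiring the scalar $|a|^{2}+|c|^{2}-|b|^{2}-|d|^{2}$ to vanish; and $ab-cd=0$ in the theorem's form. In every sub-case the two inequalities still coincide, because the algebraic identity above is unconditional.
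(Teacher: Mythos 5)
Your proposal is correct, and it reaches the theorem by a route that is recognizably parallel to the paper's but organized differently. The paper first eliminates $\phi$ by combining the equation with its conjugate multiplied by $\ee^{\ii\phi}\ee^{\ii\psi}$, arriving at a quadratic $(a^*c-bd^*)\ee^{2\ii\psi}+(|a|^2-|b|^2+|c|^2-|d|^2)\ee^{\ii\psi}+(ac^*-b^*d)=0$, and then invokes a separate lemma (Lemma~\ref{lemma:1}) characterizing when a quadratic of the form $\alpha x^2+\beta x+\alpha^*=0$ with real $\beta$ has a unit-modulus root; sufficiency is then argued separately by recovering $\ee^{\ii\phi}=-(b\ee^{\ii\psi}+d)/(c\ee^{\ii\psi}+a)$ and checking its modulus. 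You instead solve for $x=\ee^{\ii\phi}$ at the outset and observe that solvability is exactly the condition $|by+d|=|a+cy|$ for some $y$ on the unit circle, which is linear in $y$ and is settled by the elementary fact that $\Re(wy)$ sweeps $[-|w|,|w|]$; note that the paper's quadratic, divided through by $\ee^{\ii\psi}$, is precisely your relation $2\Re((b\bar d-\bar a c)y)=|a|^2+|c|^2-|b|^2-|d|^2$, so the two mechanisms are equivalent, but yours dispenses with Lemma~\ref{lemma:1} (no discriminant analysis) and delivers necessity and sufficiency in a single biconditional. Your polynomial identity is exactly the content of the paper's Lemma~\ref{lemma:2} (the paper's $|a^*c-bd^*|$ equals your $|b\bar d-\bar a c|$ by conjugation), and I have checked that both sides do reduce to the common expression $(|a|^2-|d|^2)^2+(|b|^2-|c|^2)^2-2(|a|^2|b|^2+|a|^2|c|^2+|b|^2|d|^2+|c|^2|d|^2)+8\Re(ab\bar c\bar d)$. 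Your handling of the degenerate locus $a+cy=0$ is also sound, and in fact simpler than you suggest: the condition $|by+d|=|a+cy|$ with $a+cy=0$ already forces $by+d=0$, so the reduction ``solvable iff $\exists y\in S^1$ with $|by+d|=|a+cy|$'' holds without case distinctions. The net effect is a somewhat more economical and geometric proof of the same statement.
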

Using this theorem, we rewrite the fact that lasing condition~\eqref{eq:lasing:n3} is solvable
 in the following equivalent form:
\begin{equation}
\label{eq:main:lasing}
\left||r_{11}|^2 + |r_{22}|^2 - |r_{11} r_{22}-r_{12}^2|^2 - 1\right| \leq 2|r_{12}|^2.
\end{equation}
This inequality is a symmetric lasing condition for bimodal Fabry--P\'erot interferometers
since we obtained it from the symmetric lasing condition~\eqref{eq:lasing:sym:2}.
However, one can easily show that the antisymmetric condition obtained from Eq.~\eqref{eq:lasing:asym:2} has exactly the same form.
Inequality~\eqref{eq:main:lasing} is one of two main results of this paper.

We have shown that lasing in bimodal Fabry--P\'erot interferometers is possible only if the condition of Eq.~\eqref{eq:main:lasing} holds, which makes it a \emph{necessary} condition for lasing, i.\,e., it defines a part of the parameter space where lasing is possible.
To find a particular point where lasing occurs, we have to tune two parameters to satisfy Eq.~\eqref{eq:lasing:n3}.
The light frequency and thickness of the structure might be such parameters.
Indeed, according to the theorem, some real values of $\phi$ and $\psi$ satisfying Eq.~\eqref{eq:lasing:n3} exist.
Then, we can find the corresponding values of $w$ and $\omega$ by solving the system of two equations: 
$\phi = w k_{z,1}(\omega)$; $\psi = w k_{z,2}(\omega)$.
This is possible since $\mat{S}$ is not the scattering matrix of the whole structure but that of its interfaces; therefore, its elements depend on frequency quite slow and $\phi$ and $\psi$ can be assumed to be frequency-independent.

%%%%%%%%%%%%%%%%%%%%%%%%%%%%%%%%%%%%%%%%%%%%%%%%%%%%%%%%%%%%%%%%%%%%%%%%%%%%%%%
\section{Coherent perfect absorption condition}\label{sec:cpa}
%%%%%%%%%%%%%%%%%%%%%%%%%%%%%%%%%%%%%%%%%%%%%%%%%%%%%%%%%%%%%%%%%%%%%%%%%%%%%%%

\subsection{General case}
Coherent perfect absorption occurs when for some incident wave amplitudes $I$ and $J$ there appears no scattered light: $R=T=0$.
Similarly to lasing, CPA solutions in symmetric structures have either symmetric or antisymmetric field distributions.

Symmetric CPA solutions are described by Eq.~\eqref{eq:main} with $I=J=1$, $\mat{U}=\mat{V}$, and $R=T=0$:
$$
\begin{aligned}
\mat{U} &= \mat{r}\mat{E} \mat{U} + \mat{t},\\
0 &= \mat{t}^\TT \mat{E}\mat{U} + r_0.
\end{aligned}
$$
Solving the first equation for the vector $\mat{U}$ and substituting it into the second one yields the symmetric CPA condition:
$$
\mat{t}^\TT \mat{E}(\mat{I} - \mat{r} \mat{E})^{-1} \mat{t} + r_0 = 0.
$$
Using the matrix determinant lemma, we can rewrite this condition in terms of the matrix determinant:
\begin{equation}
\label{eq:cpa:sym}
\det\left((\mat{r}-r_0^{-1} \mat{t} \mat{t}^\TT)\mat{E} - \mat{I} \right) = 0.
\end{equation}

The antisymmetric CPA condition is obtained similarly:
\begin{equation}
\label{eq:cpa:asym}
\det\left((\mat{r}-r_0^{-1} \mat{t} \mat{t}^\TT)\mat{E} + \mat{I} \right) = 0.
\end{equation}

\subsection{Time reversal reasoning}
Lasing states are related with CPA states by time reversal~\cite{Chong:prl:2010}.
In this subsection, we show how this fact applies to the presented theory.

Assume we have a monochromatic solution to the Maxwell's equations, $\vec{E}(x,y,z)$ and $\vec{H}(x,y,z)$, at a real frequency $\omega$ for some structure having the permittivity function $\eps(x,y,z)$.
The complex-conjugated fields, $\vec{E}^*(x,y,z)$ and $-\vec{H}^*(x,y,z)$, also satisfy the Maxwell's equations yet for the structure with the conjugated permittivity $\eps^*(x,y,z)$~\cite{Haus:1984}.
Speaking about plane waves, conjugation not only conjugates their complex amplitudes but also reverses the direction of each wave, thus incident waves become scattered ones and vice versa.
Speaking about scattering matrices, time reversal means that by knowing the scattering matrix $\mat{S}$ for a structure given by $\eps(x,y,z)$ we can find the scattering matrix for the ``conjugated'' structure $\eps^*(x,y,z)$.
Indeed, assume that the amplitudes of the incident waves $\mat{a}$ are related with the scattered ones $\mat{b}$ as $\mat{b} = \mat{S} \mat{a}$.
We rewrite this equation as $\mat{a}^* = (\mat{S}^*)^{-1} \mat{b}^*$
and now the matrix $(\mat{S}^*)^{-1}$ is exactly the scattering matrix for the ``conjugated'' structure where the incident waves are the conjugation of the scattered waves from the original structure~\cite{Haus:1984}.

Importantly, if the structure is described by several coupled parts, each having its own scattering matrix, the time reversal transforms each scattering matrix according to the presented law. In the considered case,
time reversal replaces $\mat{S}$ with $(\mat{S}^*)^{-1}$, whereas $\mat{E}$, which could be arranged to a scattering matrix as well, stays intact since $(\mat{E}^*)^{-1}=\mat{E}$.

Assume we found the matrices $\mat{S}$ and $\mat{E}$ satisfying the lasing condition~\eqref{eq:lasing:sym:2}.
If these matrices could be implemented by some structure $\eps(x,y,z)$, 
the ``conjugated'' structure $\eps^*(x,y,z)$ would exhibit CPA behavior since the scattered waves of the lasing state would become the incident waves and the absence of incident waves in the lasing states, after conjugation, would provide the lack of the scattered radiation.
The conjugated structure will be described by the generalized Fabry--P{\'e}rot interferometer model with the matrices $(\mat{S}^*)^{-1}$ and $\mat{E}$, which should satisfy the CPA condition~\eqref{eq:cpa:sym}.
Let us verify this fact and obtain Eq.~\eqref{eq:cpa:sym} from Eq.~\eqref{eq:lasing:sym:2}.

We start with Eq.~\eqref{eq:lasing:sym:2} with $\mat{S}$ replaced by $(\mat{S}^*)^{-1}$.
Therefore, $\mat{r}$ should be replaced with the upper left $n\times n$ block of $(\mat{S}^*)^{-1}$, 
which, with the use of the blockwise matrix inversion formula, can be found to be
$[(\mat{r} - r_0^{-1} \mat{t} \mat{t}^\TT)^*]^{-1}$.
After substituting it as $\mat{r}$ into Eq.~\eqref{eq:lasing:sym:2}, simple transformations indeed lead to Eq.~\eqref{eq:cpa:sym}.

Therefore, the CPA condition is exactly the lasing condition, in which the scattering matrix $\mat{S}$ is replaced with its conjugated inverse.

\begin{figure*}
	\centering
		\includegraphics{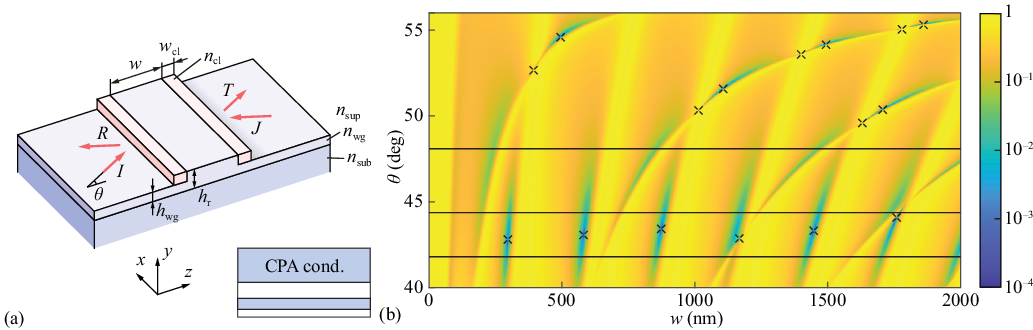}
	\caption{\label{fig:2}(a)~Ridge resonator with absorbing claddings.	
	(b)~Symmetric reflectance $|R_{\rm sym}|^2$ of the ridge resonator. %of an HCG with $h = 1000\nm$.
	Black crosses show the CPA points.
	The inset shows the part of the parameter space, in which the CPA condition~\eqref{eq:main:cpa} is fulfilled.
	}
\end{figure*}

\subsection{Bimodal case}
A straightforward way to obtain the CPA condition for a bimodal Fabry--P\'erot interferometer suggests
using Eqs.~\eqref{eq:S} and~\eqref{eq:S3} in Eq.~\eqref{eq:cpa:sym} and applying the theorem from Appendix~\ref{app:A}. 
Let us, however, follow a shorter path based on time reversal.

As we noted in the previous subsection, the CPA condition can be obtained from the lasing condition by formally replacing the elements of the matrix $\mat{S}$ with the corresponding elements of $(\mat{S}^*)^{-1}$.
The scattering matrix inverse is easily written in terms of the adjugate matrix:
\begin{equation}
\label{eq:sinv}
\mat{S}^{-1} = 
	\frac1{\det\mat S}
	\begin{bmatrix}
		r_0 r_{22} - t_2^2      & t_1 t_2 -r_0 r_{12}     & r_{12} t_2 - r_{22} t_1 \\			
		t_1 t_2 -r_0 r_{12}     & r_0 r_{11}-t_1^2        & r_{12} t_1 - r_{11} t_2 \\
		r_{12} t_2 - r_{22} t_1 & r_{12} t_1 - r_{11} t_2 & r_{11} r_{22}-r_{12}^2
  \end{bmatrix}.
\end{equation}
Performing the $\mat{S}\to(\mat{S}^{-1})^*$ replacement in Eq.~\eqref{eq:main:lasing} yields, after simplification, the second main result of this paper, namely, the CPA condition for bimodal Fabry--P\'erot interferometers:
\begin{equation}
\label{eq:main:cpa}
\begin{aligned}
&\left||r_0 r_{22} - t_2^2|^2 + |r_0 r_{11}-t_1^2|^2 - |r_0|^2 - |\det\mat{S}|^2\right| 
\\ &\hspace{14em}\leq 2|t_1 t_2 -r_0 r_{12}|^2.
\end{aligned}
\end{equation}

Similarly to Eq.~\eqref{eq:main:lasing}, this condition describes both symmetric and antisymmetric CPA
and defines a part of the parameter space where CPA might be encountered.
As in the case of lasing, finding a particular CPA point in the specified part of the parameter space requires tuning two parameters.
Despite a rather complex form of Eq.~\eqref{eq:main:cpa}, one can easily check whether it is satisfied for a particular structure.
Indeed, the inequality is directly imposed on the scattering matrix elements, which can be readily calculated using conventional numerical simulation techniques, as we demonstrate in the following section.

%%%%%%%%%%%%%%%%%%%%%%%%%%%%%%%%%%%%%%%%%%%%%%%%%%%%%%%%%%%%%%%%%%%%%%%%%%%%%%%
\section{Simulation results}
%%%%%%%%%%%%%%%%%%%%%%%%%%%%%%%%%%%%%%%%%%%%%%%%%%%%%%%%%%%%%%%%%%%%%%%%%%%%%%%

In this section, we will numerically investigate several photonic structures implementing the bimodal Fabry--P{\'e}rot interferometer.
To demonstrate the generality of the obtained theoretical results, we consider two distinct examples:
an integrated ridge resonator on the dielectric slab waveguide platform in Subsection~\ref{ssec:ridge}
and a suspended high-contrast grating with one-dimensional periodicity in Subsections~\ref{ssec:hcg}--\ref{ssec:hcg:lowabs}.

For both structures, we will focus on the symmetric CPA, however, all the effects demonstrated below appear in the antisymmetric case as well.
To investigate symmetric CPA, we will be calculating the symmetric reflection coefficient $R_{\rm sym}$.
This coefficient is the complex amplitude of the reflected wave for the case when \emph{two} unit-amplitude incident waves ($I=J=1$)
impinge on the structure from both sides (see Fig.~\ref{fig:1}).
To obtain symmetric CPA, these waves are assumed coherent and having the same phases.
One can easily show that the symmetric reflection coefficient $R_{\rm sym}$ is related with the ``conventional'' reflection and transmission coefficients (obtained at $I=1$ and $J=0$) as follows: $R_{\rm sym} = R+T$.
All simulations are performed using the Fourier modal method (FMM)~\cite{Moharam:1995:josaa, Li:1996:josaa, Li:1996:josaa2}, an established numerical tool for solving the Maxwell's equations for periodic structures. 
For simulating integrated-optics structures in Subsection~\ref{ssec:ridge}, an aperiodic formulation of the FMM was used~\cite{Silberstein:2001:josaa, Hugonin:2005:josaa}.

\subsection{CPA in integrated ridge resonator}\label{ssec:ridge}

As the first example, we will consider an integrated ridge resonator~\cite{Zou:2015:lpr, Bezus:2018:pr, Nguyen:2019:lpr} with absorbing claddings shown in Fig.~\ref{fig:2}(a) and operating at the free-space wavelength $\lambda = 630\nm$.
The resonator is located on a single-mode dielectric slab waveguide with thickness $h_{\rm wg} = 80\nm$ and refractive index $n_{\rm wg} = 3.32$.
The refractive indices of the superstrate and substrate amount to $n_{\rm sup} = 1$ and $n_{\rm sub} = 1.45$.
The thickness of the waveguide in the ridge region is $h_{\rm r} = 110\nm$.
The absorbing claddings of the resonator are constituted by metal strips placed on both sides of the ridge and having the thickness $h_{\rm r} - h_{\rm wg} = 30\nm$ and width $w_{\rm cl} = 85\nm$.
The dielectric permittivity of the strips equals $n_{\rm cl}^2 = -18.12 + 0.51\ii$, which corresponds to silver at the considered wavelength.

We will consider the case of symmetric oblique incidence of two TE-polarized guided modes on the structure as shown in Fig.~\ref{fig:2}(a).
At the presented parameters, in the considered angle of incidence range $\theta \in [40^\circ, 56^\circ]$, the scattered field also contains only TE-polarized guided modes, whereas in the ridge region, the slab waveguide has a greater thickness and thus supports both TE- and TM-polarized modes~\cite{Bezus:2018:pr}, which are coupled at the ridge interfaces where the metal strips lie.
Therefore, the considered integrated structure indeed implements the bimodal Fabry--P{\'e}rot interferometer model.

Figure~\ref{fig:2}(b) shows the symmetric reflectance of the ridge resonator $|R_{\rm sym}|^2$ vs. the angle of incidence $\theta$ and ridge width $w$.
For the considered example, the CPA condition~\eqref{eq:main:cpa} is fulfilled only in a part of the considered parameter space.
Since the CPA condition does not depend on $w$, the CPA can occur in two horizontal ``stripes'' [see the inset to Fig.~\ref{fig:2}(a)].
The CPA points, which were calculated by numerically solving the equation $R_{\rm sym} = 0$, are shown with black crosses in Fig.~\ref{fig:2}(b).
In accordance with the theoretical description of Section~\ref{sec:cpa}, these points appear only in the CPA-allowed parts of the parameter space.
Let us note that the positions of the CPA can also be predicted using the proposed model by solving Eq.~\eqref{eq:cpa:sym}.

\begin{figure*}
	\centering
		\includegraphics{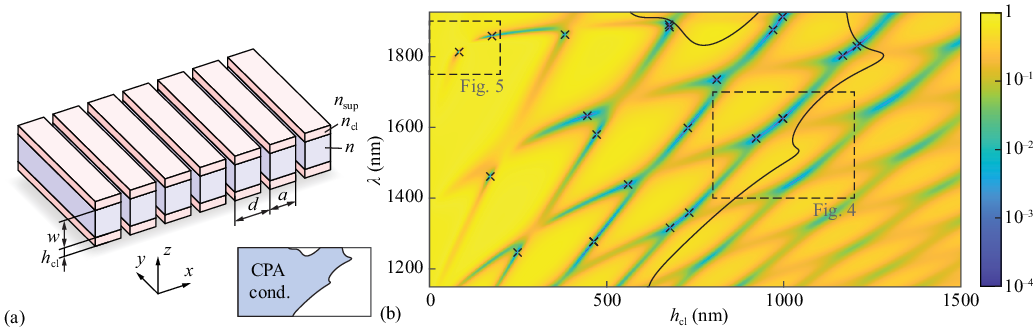}
	\caption{\label{fig:3}(a)~High-contrast grating with absorbing claddings.	
	(b)~Symmetric reflectance $|R_{\rm sym}|^2$ of an HCG with $w = 1000\nm$.
	Black crosses show the CPA points.
	The inset shows the part of the parameter space, in which the CPA condition~\eqref{eq:main:cpa} is fulfilled.
	}
\end{figure*}

\subsection{CPA in HCGs}\label{ssec:hcg}
As the second example, we will consider a high-contrast grating shown in Fig.~\ref{fig:3}(a).
The period of the considered HCG is $d=1000\nm$; its fill-factor is $a/d=1/2$.
The refractive index of air $n_{\rm sup}=1$ was used for the materials above and below the structure, as well as for the material between the grating rods.
The middle part of the HCG having the thickness~$w$ was assumed to be a dielectric with the refractive index $n = 3.5$.
The upper and lower parts of the HCG, referred to as claddings, have the thickness $h_{\rm cl}$ each and are assumed absorbing with the complex refractive index $n_{\rm cl} = n + \ii k$.
The particular values of the thicknesses $w$ and $h_{\rm cl}$, as well as of the extinction coefficient~$k$ are presented below, since they are different for different examples.
We will be studying the optical properties of the described structure in the wavelength range $\lambda\in[1150, 1925]\nm$.

A stratified medium between the claddings is a one-dimensional photonic crystal, which, for the presented parameters, supports three propagating Bloch modes.
If we, however, restrict ourselves to the case of normal incidence of TE-polarized light, which we will do, 
one of these modes will have an antisymmetric field distribution and thus will never be excited.
The two remaining modes having symmetric field distribution can be excited, which makes the considered HCG a bimodal Fabry--P{\'e}rot interferometer.
We also note that in the considered wavelength range, the HCG is subwavelength, having only one reflected and one transmitted propagating diffraction order; 
this also agrees with the assumptions made in the theoretical part of the paper.

Figure~\ref{fig:3}(b) shows the simulated intensity of the reflected light $|R_{\rm sym}|^2$ for the structure having the distance between the claddings $w=1000\nm$ and the extinction coefficient of the claddings $k=0.1$; the incident light wavelength $\lambda$ and the cladding thickness $h_{\rm cl}$ are considered as parameters.
One can see from the figure, that at several points of the parameter space, the symmetric reflection coefficient vanishes; these points marked with crosses are the points where CPA occurs.

As it is shown in the inset to Fig.~\ref{fig:3}, the CPA condition~\eqref{eq:main:cpa} is fulfilled only in a part of the parameter space.
The boundary of this part is shown with a black line in Fig.~\ref{fig:3}(b).
On this line, the inequality~\eqref{eq:main:cpa} becomes an equality,
thus the line separates the part where CPA is allowed from the part where it is forbidden.
As in the previous example, all CPA points lie exactly in the part of the parameter space where the CPA condition is fulfilled.
Let us now show what happens to the CPA points when they approach the black line.

\subsection{CPA annihilation when violating the CPA condition}

Let us consider a fragment of Fig.~\ref{fig:3}(b), the magnified version of which is shown in Fig.~\ref{fig:4}(a).
This fragment contains two CPA points, marked as CPA-1 and CPA-2.
Figures~\ref{fig:4}(b)--(d) show what happens with these CPA points when we decrease the distance between the claddings~$w$.
Since the CPA condition~\eqref{eq:main:cpa} does not contain the phases $\phi$ and $\psi$ depending on~$w$,
changing~$w$ would not change the location of the black line separating the CPA-allowed and CPA-forbidden parts of the parameter space.
It is evident from Fig.~\ref{fig:4}(a)--(c) that decreasing~$w$ makes the two CPA points to move closer to each other.
They meet at $w \approx 850\nm$ exactly on the black line [see Fig.~\ref{fig:4}(c)].
Decreasing~$w$ further makes the two CPA points disappear [see Fig.~\ref{fig:4}(d)].

The lower plots (e)--(h) in Fig.~\ref{fig:4} show the argument of the complex reflection coefficient $\arg R_{\rm sym}$.
One can see that around each CPA point, the phase exhibits vortex behavior:
the phase continuously changes from 0 to $2\pi$ (or from $2\pi$ to $0$) as we travel around a closed loop encircling a CPA.
Therefore, the phase of $R_{\rm sym}$ in undefined exactly at a CPA point.
This allows one to assign each CPA a topological invariant---topological charge---defined as~\cite{Sakotic:2021:pr}
\begin{equation}
\label{eq:tc}
C = \frac{1}{2\pi}\oint_\Gamma{\rm d}\arg R_{\rm sym},
\end{equation}
where the total differential for the considered $(h_{\rm cl}, \lambda)$ parameter space reads as
$$
{\rm d}\arg R_{\rm sym} 
= 
\frac{\partial \arg R_{\rm sym}}{\partial h_{\rm cl}} {\rm d} h_{\rm cl} 
+
\frac{\partial \arg R_{\rm sym}}{\partial \lambda} {\rm d} \lambda
$$
and $\Gamma$ is a contour in the parameter space encircling a CPA point. 
The contour integral in Eq.~\eqref{eq:tc} ``counts'' how many times the phase changes from $0$ to $2\pi$ as we travel around a CPA point.
Usually, each CPA has the topological charge $C$ equal to $+1$ or $-1$.
For example, the CPA-1 in Fig.~\ref{fig:4}(a) has the topological charge $+1$ since the phase $\arg R_{\rm sym}$ increases when the contour $\Gamma$ is traversed counterclockwise as it is shown in the figure.
Similarly, the phase decreases as we travel around CPA-2, thus its topological charge equals $-1$.
The topological charge adheres to the conservation law in the following sense: 
any interaction between the CPA points should conserve the topological charge.
In particular, the disappearance of CPA demonstrated in Fig.~\ref{fig:4} is only possible because the annihilating CPA points have opposite topological charges.

\begin{figure*}
	\centering
		\includegraphics{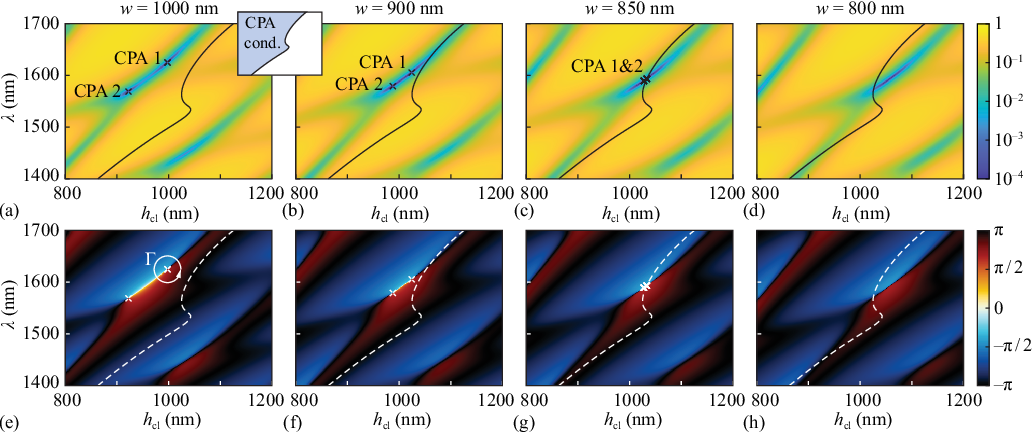}
	\caption{\label{fig:4}
	Symmetric reflection coefficient for different values of~$w$:
	intensity $|R_{\rm sym}|^2$ (upper plots)
	and phase $\arg R_{\rm sym}$ (lower plots).
	Black and white crosses show the CPA points.
	The inset shows the part of the parameter space where the CPA condition is fulfilled.
	}
\end{figure*}

\subsection{CPA at low absorption and BICs}\label{ssec:hcg:lowabs}

Let us now consider an important limiting case when the extinction coefficient~$k$ tends to zero.
In this case, the structure becomes non-absorbing, thus, no coherent perfect absorption may remain and all CPA points must annihilate, as in the previous subsection. 
To demonstrate this, we consider a different fragment of Fig.~\ref{fig:3}(b) containing another pair of CPA points.
The magnified version of this fragment is presented in Fig.~\ref{fig:5}(a).
The subplots (a)--(c) demonstrate that decreasing~$k$ from $0.1$ down to $0.01$ indeed results in CPA to move closer to each other.
If we decrease $k$ down to zero, the CPA will disappear.

One may expect that  decreasing $k$ to zero would make the black line, which separates the CPA-allowed and CPA-forbidden parts of the parameter space, to move across the parameter space, gradually ``annihilating'' all CPA points, one pair at a time.
But this is not the case; in fact, all CPA points in Fig.~\ref{fig:3}(b) group in pairs and disappear \emph{simultaneously} exactly when $k=0$.
To explain why this is the case, let us investigate the CPA condition~\eqref{eq:main:cpa} at $k=0$.

For non-absorbing structures ($k=0$), the scattering matrix $\mat{S}$ is unitary, i.\,e., its inverse is its conjugate transposed: $\mat{S}^{-1} = (\mat{S}^\TT)^*$.
Besides, the scattering matrix determinant has the modulus equal to one: $|\det\mat{S}| = 1$.
Using these facts and Eqs.~\eqref{eq:S3} and~\eqref{eq:sinv}, we can rewrite the left-hand side (LHS) of the CPA condition~\eqref{eq:main:cpa} as
$\left||r_{11}|^2 + |r_{22}|^2 - |r_0|^2 - 1\right|$.
Now let us use the fact that the norm of each row of the unitary scattering matrix $\mat{S}$ is unity: 
$$
\begin{aligned}
|r_{11}|^2+|r_{12}|^2+|t_1|^2&=1;\\
|r_{12}|^2+|r_{22}|^2+|t_2|^2&=1;\\
|t_1|^2+|t_2|^2+|r_{0}|^2&=1.\\
\end{aligned}
$$
These equalities allow us to simplify the LHS of the CPA condition~\eqref{eq:main:cpa} to $2|r_{12}|^2$, which is exactly the right-hand side (RHS) of the CPA condition.
Therefore, we have shown that for a non-absorbing structure, the inequality in the CPA condition~\eqref{eq:main:cpa} becomes an equality that is always satisfied.
This means that the CPA condition becomes violated simultaneously for the whole parameter space exactly when $k$ reaches zero.
Due to time reversal, the same holds for the lasing condition of Eq.~\eqref{eq:main:lasing}.

Now let us turn to the CPA condition~\eqref{eq:main:cpa} involving the phase matrix $\mat{E}$, i.\,e., the form of the CPA condition \emph{before} applying the theorem from the appendix.
One can easily show that condition~\eqref{eq:cpa:sym} for a unitary matrix $\mat{S}$ coincides with the lasing condition of Eq.~\eqref{eq:lasing:sym:2}, which is exactly the BIC condition~\cite{Bezus:2018:pr} once $\mat{S}$ is unitary.
Therefore, decreasing the losses in the structure to zero makes a pair of CPA points to merge and, exactly at the merging point, a BIC appears.
This is indeed the case, as we show in Fig.~\ref{fig:5}(d).
In this plot, we show not the symmetric reflectance $|R_{\rm sym}|^2$, but the conventional transmittance $|T|^2$, which is the transmitted field intensity for a single incident plane wave ($I=1$, $J=0$).
One can clearly see a resonant line with its width vanishing exactly at the point where CPAs 3 and 4 should have met.
The divergence of the $Q$-factor (not presented here) supports the fact that the considered resonant state is indeed a BIC.

\begin{figure*}
	\centering
		\includegraphics{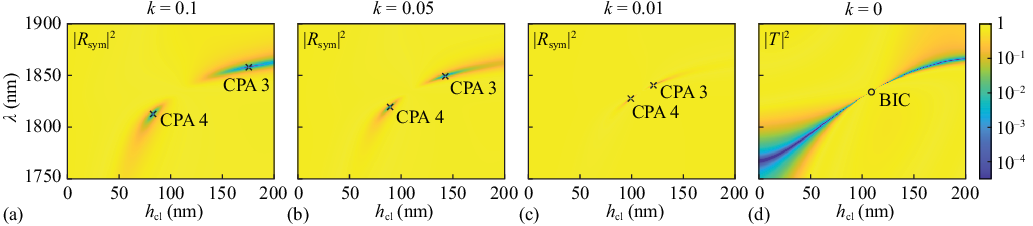}
	\caption{\label{fig:5}Symmetric reflectance $|R_{\rm sym}|^2$ at $w=1000\nm$ for different values of the extinction coefficient~$k$ (a)--(c) 
	and transmission coefficient $|T|^2$ for $k=0$.
	Black crosses show the CPA points, black circle shows the BIC.
		}
\end{figure*}

%%%%%%%%%%%%%%%%%%%%%%%%%%%%%%%%%%%%%%%%%%%%%%%%%%%%%%%%%%%%%%%%%%%%%%%%%%%%%%%
\section{Conclusion}
%%%%%%%%%%%%%%%%%%%%%%%%%%%%%%%%%%%%%%%%%%%%%%%%%%%%%%%%%%%%%%%%%%%%%%%%%%%%%%%
In the present work, we investigated the conditions of coherent perfect absorption and lasing at threshold in generalized Fabry--P{\'e}rot interferometers supporting two kinds of waves propagating between the claddings (interfaces).
We proved that CPA and lasing states in bimodal Fabry--P{\'e}rot interferometers appear only in certain regions in the parameter space, defined by inequalities directly imposed on the elements of the scattering matrices of the claddings.
Since the phase of the reflected light amplitude near CPA exhibits vortex behavior, each CPA point can be endowed with a topological charge.
We demonstrated pairwise annihilation of the CPA points with opposite topological charges occurring when the CPA condition is violated.

The CPA investigated in the present work can be referred to as conditional.
It is worth noting that another example of such ``conditional'' behavior was presented in our previous work~\cite{Bykov:2020:n}.
In that work, it was demonstrated that ``phase-only'' bound states in the continuum appear in a bimodal Gires--Tournois interferometer only when a particular inequality is satisfied.
The fact that in the present work we study not BICs but CPA in structures with a different number of open scattering channels and different symmetry,
makes the mathematical treatment in the present work completely different from Ref.~\cite{Bykov:2020:n}.
In particular, the theorem from Appendix~\ref{app:A} is not applicable to the problem considered in Ref.~\cite{Bykov:2020:n} and vice versa.
Nevertheless, the bimodal interferometer model provides a general theoretical framework allowing one to describe and study a wide variety of resonant optical effects.
In this regard, we believe that further investigation of generalized Fabry--P{\'e}rot interferometers possessing different symmetries and different number of open scattering channels is promising for explaining, why some resonant effects tend to appear in one part of the parameter space but not the other.

\vspace{1em}

%%%%%%%%%%%%%%%%%%%%%%%%%%%%%%%%%%%%%%%%%%%%%%%%%%%%%%%%%%%%%%%%%%%%%%%%%%%%%%%
\section*{Acknowledgments}
%%%%%%%%%%%%%%%%%%%%%%%%%%%%%%%%%%%%%%%%%%%%%%%%%%%%%%%%%%%%%%%%%%%%%%%%%%%%%%%
This work was funded 
by the Russian Science Foundation (project 22-12-00120-P, investigation of CPA condition and HCGs;
project 24-12-00028, investigation of ridge resonators),
and performed within the State assignment of NRC ``Kurchatov Institute'' (implementation of the numerical simulation software).

\appendix

%%%%%%%%%%%%%%%%%%%%%%%%%%%%  Section: Appendix 1 %%%%%%%%%%%%%%%%%%%%%%%%%%%%%
\section{Proof of the theorem}\label{app:A}
%%%%%%%%%%%%%%%%%%%%%%%%%%%%%%%%%%%%%%%%%%%%%%%%%%%%%%%%%%%%%%%%%%%%%%%%%%%%%%%
Here, we prove two lemmas and the theorem used to obtain the lasing and CPA conditions.

\begin{lemma}\label{lemma:1}
Consider an equation for $\psi$:
$$ \alpha \ee^{2\ii\psi} + b \ee^{\ii\psi} + \alpha^*=0,$$
where $b$ is real and $\alpha$ is complex.
This equation has at least one real solution if and only if $|b| \leq 2|\alpha|$.
\end{lemma}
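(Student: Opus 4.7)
The plan is to reduce the equation to a single real cosine equation by exploiting the symmetry between the $\alpha$ and $\alpha^{*}$ terms. First I would dispose of the trivial case $\alpha = 0$: then the equation becomes $b\ee^{\ii\psi}=0$, which has a real solution iff $b=0$, and the inequality $|b|\le 2|\alpha|$ reduces to $0\le 0$, so the lemma holds. From now on assume $\alpha\neq 0$.

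Next I would multiply both sides by $\ee^{-\ii\psi}$ (which is never zero for real $\psi$, so no spurious roots are gained or lost). The equation becomes
$$ \alpha\,\ee^{\ii\psi} + \alpha^{*}\,\ee^{-\ii\psi} + b = 0. $$
The first two terms are manifestly complex conjugates of one another when $\psi$ is real, so their sum equals $2\,\Re(\alpha\,\ee^{\ii\psi})$.

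The final step is to write $\alpha = |\alpha|\,\ee^{\ii\theta}$ in polar form; then $\Re(\alpha\,\ee^{\ii\psi}) = |\alpha|\cos(\psi+\theta)$ and the equation reduces to
$$ \cos(\psi+\theta) = -\frac{b}{2|\alpha|}. $$
Since $b$ and $|\alpha|$ are real and $\alpha\neq 0$, this equation in the real unknown $\psi$ is solvable if and only if the right-hand side lies in $[-1,1]$, i.e.\ iff $|b|\le 2|\alpha|$. This is exactly the claimed criterion. I would close by noting that conversely, if $|b|\le 2|\alpha|$, then $\psi = -\theta + \arccos(-b/(2|\alpha|))$ is an explicit real solution, establishing the ``if'' direction.

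There is no real obstacle here — the proof is essentially a single observation (that multiplying by $\ee^{-\ii\psi}$ turns the equation into a cosine), so the only thing to be careful about is the bookkeeping of the degenerate case $\alpha=0$ and making sure the multiplication by $\ee^{-\ii\psi}$ is an equivalence rather than a one-way implication.
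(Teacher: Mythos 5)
Your proof is correct, and it takes a genuinely different route from the paper's. The paper treats the equation as a quadratic in $x=\ee^{\ii\psi}$, writes down the roots via the quadratic formula, and then computes $|x_{1,2}|^2$ in the two cases of non-positive and positive discriminant to decide when a root lies on the unit circle. You instead exploit the palindromic structure of the coefficients $(\alpha, b, \alpha^*)$: dividing by $\ee^{\ii\psi}$ makes the left-hand side manifestly real for real $\psi$, namely $2\Re(\alpha\,\ee^{\ii\psi})+b = 2|\alpha|\cos(\psi+\theta)+b$, and the solvability criterion $|b|\le 2|\alpha|$ drops out as the range condition for the cosine. Your argument is shorter, gives both directions of the equivalence in one stroke (the paper has to argue the ``only if'' direction separately by assuming $|x_{1,2}|=1$ and deriving a contradiction), and produces an explicit solution $\psi=-\theta+\arccos\left(-b/(2|\alpha|)\right)$. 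What the paper's quadratic-formula route buys is that it stays entirely within the algebra of the roots $x_{1,2}$, which matches how the lemma is consumed later (a root $\ee^{\ii\psi}$ on the unit circle is what feeds into the theorem's construction of $\phi$), but nothing in the later argument depends on that packaging, so your version would serve equally well. Your handling of the degenerate case $\alpha=0$ and your remark that multiplication by the nonvanishing factor $\ee^{-\ii\psi}$ is reversible are both the right things to check.
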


\begin{proof}
Let us first consider the degenerate case $\alpha = 0$. 
Here, $|b| \leq 2|\alpha|$ implies $b=0$ and any real $\psi$ is the root of the equation.
On the contrary, when $|b| > 2|\alpha|$, no real solutions exist.
This agrees with the lemma statement.

If $\alpha \neq 0$, we can consider the equation as a quadratic equation with respect to $x = \ee^{\ii\psi}$.
The roots of this equation read as
$$
x_{1,2} = \frac{-b\pm\sqrt{b^2 - 4|\alpha|^2}}{2\alpha}.
$$

When $|b| \leq 2|\alpha|$, the discriminant is non-positive and the squared modulus of $x_{1,2}$ is calculated as
$$
|x_{1,2}|^2 = \frac{b^2 + \left[-(b^2 - 4|\alpha|^2)\right]}{4|\alpha|^2}
= 1,
$$
thus $\psi$ is real as the lemma suggests.

When $|b| > 2|\alpha|$, the discriminant is positive and
$$
% \begin{aligned}
|x_{1,2}|^2 
= \frac{\left(b \mp \sqrt{b^2 - 4|\alpha|^2}\right)^2}{4|\alpha|^2}.
% \\&= \frac{b^2 - 2|\alpha|^2 \mp b \sqrt{b^2 - 4|\alpha|^2}}{2|\alpha|^2}.
% \end{aligned}
$$
Assume this equals one; therefore,
$$
b^2 - 4|\alpha|^2 =\pm b \sqrt{b^2 - 4|\alpha|^2}.
$$
This takes place either when $|b| = 2|\alpha|$ or when $\alpha = 0$, both leading to a contradiction, hence, $|x_{1,2}|^2 \neq 1$ and no real $\psi$ solves the equation.
\end{proof}

\begin{lemma}\label{lemma:2}
The following inequalities are equivalent:
\begin{equation}
\label{ineq1}
\left||a|^2+|b|^2-|c|^2-|d|^2\right| \leq 2|ab-cd|,
\end{equation}
\begin{equation}
\label{ineq2}
\left||a|^2-|b|^2+|c|^2-|d|^2\right| \leq 2|a^*c - b d^*|,
\end{equation}
where $a,b,c,d\in\mathbb{C}$.
\end{lemma}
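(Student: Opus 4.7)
The plan is to square both inequalities (permissible since all four quantities involved are non-negative reals) and to verify that the resulting polynomial inequalities are in fact identical. Introducing the abbreviations $A=|a|^2$, $B=|b|^2$, $C=|c|^2$, $D=|d|^2$, the two inequalities become
\begin{equation*}
(A+B-C-D)^2 \leq 4|ab-cd|^2
\end{equation*}
and
\begin{equation*}
(A-B+C-D)^2 \leq 4|a^*c-bd^*|^2.
\end{equation*}

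Next I would expand the two moduli on the right-hand sides by multiplying each expression by its complex conjugate. A direct calculation yields $|ab-cd|^2 = AB + CD - 2\Re(abc^*d^*)$ and $|a^*c-bd^*|^2 = AC + BD - 2\Re(abc^*d^*)$. The crucial feature is that the same cross term $-2\Re(abc^*d^*)$ appears in both; this is precisely where the conjugation pattern in $a^*c-bd^*$ is used, since only with that pattern does the cross term from the second expression coincide with the one from $ab-cd$.

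After subtracting these common cross terms, the equivalence of the two inequalities reduces to the purely algebraic identity
\begin{equation*}
(A+B-C-D)^2 - 4(AB+CD) = (A-B+C-D)^2 - 4(AC+BD),
\end{equation*}
which I would verify by expanding each side and observing that both reduce to $A^2+B^2+C^2+D^2 - 2(AB+AC+AD+BC+BD+CD)$. Consequently, the ``defect'' between the squared LHS and squared RHS is the same for both inequalities, so one holds if and only if the other does.

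I expect no serious obstacle here; the argument is essentially careful bookkeeping. The only subtle point is recognizing that the conjugations in $a^*c-bd^*$ are chosen precisely so that the cross terms in the two moduli squared coincide. With a different conjugation pattern such as $ac-bd$, the cross terms would fail to match and the equivalence would not hold, which is what makes inequality~\eqref{ineq2} a nontrivial reformulation of \eqref{ineq1}.
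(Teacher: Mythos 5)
Your proof is correct and follows essentially the same route as the paper: square both inequalities, observe that $|ab-cd|^2$ and $|a^*c-bd^*|^2$ share the identical cross term $-2\Re(abc^*d^*)$, and check that the remaining real polynomial identity holds. The paper merely organizes the bookkeeping slightly differently (grouping terms as $(|a|^2-|d|^2)(|c|^2-|b|^2)$ rather than exhibiting a common symmetric ``defect''), but the key observation and the computation are the same.
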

\begin{proof}
%We will first prove that Eq.~\eqref{ineq2} is equivalent to Eq.~\eqref{ineq1}.
The squared RHS of Eq.~\eqref{ineq2} can be rewritten as
$$
\begin{aligned}
4|a^*c - &b d^*|^2
=4(|ac|^2+|bd|^2-a^*c b^* d-ac^* b d^*)
\\&=4(|ac|^2+|bd|^2+|ab-cd|^2-|ab|^2-|cd|^2)
\\&=4(|a|^2-|d|^2)(|c|^2-|b|^2)+4|ab-cd|^2.
\end{aligned}
$$
The squared LHS of inequality~\eqref{ineq2} is
$$
(|a|^2-|d|^2)^2+(|c|^2-|b|^2)^2+ 2(|a|^2-|d|^2)(|c|^2-|b|^2).
$$
Therefore, inequality~\eqref{ineq2} can be written in the following equivalent form:
$$
\begin{aligned}
(|a|^2-|d|^2)^2+(|c|^2-|b|^2)^2-2(|a|^2-|d|^2&)(|c|^2-|b|^2)   \\&\leq   4|ab-cd|^2.
\end{aligned}
$$
Taking the square roots of the LHS and RHS, we obtain Eq.~\eqref{ineq1}.
Going the same way backwards allows one to obtain Eq.~\eqref{ineq2} from Eq.~\eqref{ineq1}.

\end{proof}

\begin{theorem*}\label{Theorem1}
Equation
\begin{equation}
\label{leq2}
a \ee^{\ii \phi}+b \ee^{\ii \psi}+c \ee^{\ii \phi}\ee^{\ii \psi}+d = 0
\end{equation}
with $a,b,c,d\in\mathbb{C}$
is solvable in real $\phi, \psi$ if and only if
\begin{equation}
\label{l2cond}
\left||a|^2+|b|^2-|c|^2-|d|^2\right| \leq 2|ab-cd|.\vspace{1em}
\end{equation}
\end{theorem*}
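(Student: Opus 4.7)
The plan is to reduce the joint solvability in the two real variables $\phi,\psi$ to a one-variable question that fits the shape of Lemma~\ref{lemma:1}, and then to use Lemma~\ref{lemma:2} to recast the resulting inequality into the form stated in the theorem.

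First, I would rewrite equation~\eqref{leq2} in the factored form
$$
\ee^{\ii\phi}\bigl(a + c\,\ee^{\ii\psi}\bigr) = -\bigl(d + b\,\ee^{\ii\psi}\bigr).
$$
For each fixed $\psi$, a real $\phi$ exists if and only if the two sides have equal modulus. The joint real solvability is therefore equivalent to the existence of a real $\psi$ with
$$
|a + c\,\ee^{\ii\psi}|^2 = |d + b\,\ee^{\ii\psi}|^2.
$$
The degenerate case $a + c\,\ee^{\ii\psi}=0$ is handled automatically by this modulus equality: it then forces $d + b\,\ee^{\ii\psi}=0$ as well, and any real $\phi$ works.

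Second, I would expand both squared moduli and multiply the resulting identity through by $\ee^{\ii\psi}$. A short algebraic rearrangement produces an equation of the shape
$$
\alpha\,\ee^{2\ii\psi} + B\,\ee^{\ii\psi} + \alpha^* = 0,
$$
with $\alpha = a^* c - b\, d^*$ (so that the constant term is manifestly $\alpha^*$) and the real scalar $B = |a|^2 + |c|^2 - |b|^2 - |d|^2$. This is precisely the situation of Lemma~\ref{lemma:1}, so a real $\psi$ exists if and only if $|B|\leq 2|\alpha|$, i.e.,
$$
\bigl||a|^2 - |b|^2 + |c|^2 - |d|^2\bigr| \leq 2\,|a^* c - b\, d^*|,
$$
which is exactly inequality~\eqref{ineq2} of Lemma~\ref{lemma:2}. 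Applying that lemma converts this to~\eqref{l2cond}, completing the proof.

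The argument therefore splits into three ingredients: the factoring and modulus reformulation, the algebraic repackaging into the quadratic-in-$\ee^{\ii\psi}$ form of Lemma~\ref{lemma:1}, and the invocation of Lemma~\ref{lemma:2}. The main obstacle I anticipate is purely bookkeeping --- arranging the expansion so that the coefficient of $\ee^{2\ii\psi}$ and the constant term come out as a manifest complex-conjugate pair (so Lemma~\ref{lemma:1} applies directly) while the middle coefficient is manifestly real. Once that matching is verified, both lemmas slot in with no further analytical work.
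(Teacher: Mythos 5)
Your proof is correct, and it relies on the same two lemmas and arrives at the very same quadratic in $\ee^{\ii\psi}$ (the paper's Eq.~\eqref{eqpsi}, with leading coefficient $a^*c-bd^*$, real middle coefficient, and conjugate constant term). The difference is in how you get there and how you close the loop. The paper proves the two directions separately: for necessity it eliminates $\phi$ by forming a specific linear combination of Eq.~\eqref{leq2} with its conjugate multiplied by $\ee^{\ii\phi}\ee^{\ii\psi}$, and for sufficiency it must then separately verify that the $\phi$ recovered from $\ee^{\ii\phi}=-(b\ee^{\ii\psi}+d)/(c\ee^{\ii\psi}+a)$ is real, by computing $1-|\ee^{\ii\phi}|^2=N/|c\ee^{\ii\psi}+a|^2$ with $N$ the left-hand side of the quadratic, plus a degenerate case $c\ee^{\ii\psi}+a=0$. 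You instead observe up front that, for fixed real $\psi$, a real $\phi$ exists \emph{if and only if} $|a+c\ee^{\ii\psi}|=|d+b\ee^{\ii\psi}|$ (with the degenerate case absorbed automatically, since vanishing of one side forces vanishing of the other), and expanding this modulus equality and multiplying by $\ee^{\ii\psi}$ is precisely what produces the quadratic. This single biconditional replaces both of the paper's directions at once --- it is, in effect, the observation that the paper's sufficiency computation was secretly an equivalence all along --- so your organization is tidier and slightly more economical, at no cost in rigor.
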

\begin{proof}
First, we prove that Eq.~\eqref{l2cond} is a \emph{necessary} condition.
Assuming $\phi$ and $\psi$ are real, we write the complex conjugate of Eq.~\eqref{leq2} multiplied by $\ee^{\ii \phi}\ee^{\ii \psi}$:
\begin{equation}
\label{leq2b}
a^* \ee^{\ii \psi}+b^* \ee^{\ii \phi}+c^* +d^* \ee^{\ii \phi}\ee^{\ii \psi} = 0.
\end{equation}

Now we take Eq.~\eqref{leq2} multiplied by $-(b^* + d^* \ee^{\ii \psi})$ and add it to Eq.~\eqref{leq2b} multiplied by $(a + c \ee^{\ii \psi})$.
This gives us the following equation:
\begin{equation}
\label{eqpsi}
(a^*c - b d^*) \ee^{2\ii \psi}
+ (|a|^2-|b|^2+|c|^2-|d|^2)\ee^{\ii \psi}
+ (a c^* - b^* d)  = 0.
\end{equation}
According to Lemma~\ref{lemma:1}, this equation has a real-$\psi$ solution only if
\begin{equation}
\label{leq3}
\left||a|^2-|b|^2+|c|^2-|d|^2\right| \leq 2|a^*c - b d^*|,
\end{equation}
which, according to Lemma~\ref{lemma:2}, is equivalent to Eq.~\eqref{l2cond}.
Therefore, we proved that the fact that Eq.~\eqref{leq2} has a real-valued solution implies Eq.~\eqref{l2cond}.

\vspace{1em}
Now let us prove that Eq.~\eqref{l2cond} is a \emph{sufficient} condition.
Assuming Eq.~\eqref{l2cond} holds, we 
use Lemma~\ref{lemma:2} to obtain Eq.~\eqref{leq3} and 
use Lemma~\ref{lemma:1} to find a real $\psi$ satisfying Eq.~\eqref{eqpsi}.
Let us show that such $\psi$ is a solution to Eq.~\eqref{leq2} for some \emph{real} $\phi$.
To do this, we express $\ee^{\ii\phi}$ from Eq.~\eqref{leq2} as
$$
\ee^{\ii\phi} = -\frac{b \ee^{\ii\psi} + d} {c \ee^{\ii\psi} + a}.
$$
We now multiply this expression by its complex conjugate and consider the quantity $1 - |\ee^{\ii\phi}|^2$, 
which, after simplification, takes the following form:
$$
1 - |\ee^{\ii\phi}|^2 = 
\frac
{N}
{|c \ee^{\ii\psi} + a|^2},
$$
where $N$ is exactly the LHS of Eq.~\eqref{eqpsi}, which is equal to zero.
This, assuming $c \ee^{\ii\psi} + a \neq 0$, proves that Eq.~\eqref{leq2} is indeed solvable in real $\phi$ and $\psi$ once the condition~\eqref{l2cond} is met, which makes it a sufficient condition.
When the denominators above vanish ($c \ee^{\ii\psi} + a = 0$), Eq.~\eqref{l2cond} is also a sufficient condition, proving which is quite trivial.

%
%\vspace{1em}
%In the last part of the proof we should check the sufficient condition when the denominators above vanish, i.\,e. when $\psi$ provided by Lemma~1 satisfies
%\marginpar{Remove this part?}
%\begin{equation}
%\label{case3}
%c \ee^{\ii\psi} + a = 0.
%\end{equation}
%In this case $\ee^{\ii\phi}$ disappears from Eq.~\eqref{leq2} and we can pick any real number as $\phi$.
%However we have to verify that the obtained value of $\psi$ indeed satisfies 
%\begin{equation}
%\label{xxx1}
%b \ee^{\ii\psi} + d = 0,
%\end{equation}
%which is how Eq.~\eqref{leq2} looks in the considered case.
%
%When $c = 0$, Eq.~\eqref{case3} implies $a = 0$. Then, Eq.~\eqref{l2cond} implies $|b| = |d|$ and we can take $\psi = \arg(-d/b)$ to solve Eq.~\eqref{leq2} when $b \neq 0$;
%when $b = d = 0$ any real $\psi$ would fit.
%
%When $c \neq 0$, we use Eq.~\eqref{case3} to obtain $|a|^2 = |c|^2$; hence, $a \neq 0$.
%These facts along with the Eq.~\eqref{case3} itself allows to rewrite Eq.~\eqref{eqpsi} in the following form:
%$$
%|ab-cd|^2 = 0.
%$$
%Thus, $ab=cd$ and by multiplying Eq.~\eqref{case3} by $b c^{-1}$ we obtain Eq.~\eqref{xxx1}, which along with Eq.~\eqref{case3} proves that Eq.~\eqref{leq2} holds.
%This finally proves the theorem.

\end{proof}

\bibliography{CPA_Arxiv}

\end{document}